\DeclareMathOperator{\E}{\mathbb{E}}
\DeclareMathOperator{\ESS}{ESS}
\newcommand{\ind}{\text{I}}
\newcommand{\R}{\mathbb{R}}
\newcommand{\N}{\mathbb{N}}
\newtheorem{lemma}{Lemma}
\newtheorem{example}{Example}
\newcommand\fp[1]{{\boldsymbol{\{}}#1{\boldsymbol{\}}}}
\title{The chopthin algorithm for resampling}
\author{Axel Gandy \qquad F. Din-Houn Lau \\Department of Mathematics,
  Imperial College London}
\date{}
\begin{document}
\maketitle

\begin{abstract}
  Resampling is a standard step in particle filters and more generally
  sequential Monte Carlo methods. We present an algorithm, called
  chopthin, for resampling weighted particles.  In contrast to
  standard resampling methods the algorithm does not produce a set of
  equally weighted particles; instead it merely enforces an upper
  bound on the ratio between the weights.  Simulation studies show
  that the chopthin algorithm consistently outperforms standard
  resampling methods.  The algorithms chops up particles with large
  weight and thins out particles with low weight, hence its name.  It
  implicitly guarantees a lower bound on the effective sample size.
  The algorithm can be implemented  efficiently, making it
  practically useful.  We show that the expected computational effort
  is linear in the number of particles.  Implementations for
  C\texttt{++}, R (on CRAN), Python and  Matlab are
  available.
\end{abstract}

{\bf Key words: effective sample size; importance sampling;  particle filter; resampling; }

\section{Introduction}\label{sec:introduction}

Particle filters and more generally sequential Monte Carlo methods
have gained importance and widespread use \cite{smcm}. One of their
key steps is resampling, which is intended to prevent weight
degeneracy. Broadly speaking, resampling starts with a set of
particles $x_1,\dots,x_n$ with associated weights $w_1,\dots, w_n$ and
produces a new set of particles (a subset of the original set with
potentially duplicates) with less uneven weights (often equal
weights).

A commonly used resampling algorithm is multinomial sampling, which
selects a new set of particles by sampling $n$ times with replacement
from $x_1,\dots,x_n$ with probabilities proportional to $w_1,\dots,
w_n$.  Other resampling schemes have been proposed, for example
systematic resampling
\citep{whitley1994,carpenter99:ImprovedPartFilter}, stratified
resampling \citep{Kitagawa1996strat}, residual resampling
\cite{liu1998sequential} and branching resampling \citep[p.\
278]{fundamentals}.  All of these algorithms return a set of particles
with equal weights.

The general consensus seems to be that, whilst it is possible to
outperform multinomial resampling, the more advanced methods such as
residual, stratified and systematic resampling are comparable in terms
of their performance in particle filters
\citep{douc2005comparison,hol2006resampling}.

In this article we show that it is possible to improve the performance
of the resampling step significantly.  We do this by presenting a new
resampling method that consistently outperforms the aforementioned
methods.  

The new algorithm, called chopthin, ensures that the weights are not
too uneven by enforcing an upper bound, $\eta$, on the ratio between
the resulting weight.  Chopthin can outperform other methods because
it does not return particles with equal weights.

The chopthin algorithm enforces the upper bound, $\eta$, on the ratio
between the weights, as follows: Particles with large weights, above a
threshold $a$, are potentially ``chopped'', i.e.\ replicated with the
original weight spread among the replicates. Particles with small
weights, below the threshold $a$, are ``thinned'' by randomly deciding
whether they should be deleted or kept, adjusting the weights by the
selection probability to ensure unbiasedness. A similar approach to
the thinning part of chopthin is used in \cite{fearnhead03:markov}
where the optimality of such a resampling method is shown in a certain
sense.

Particle filters often only perform the resampling step if a criterion
of the unevenness of the weights, such as the effective sample size
(ESS), drops below a fixed threshold. This avoids resampling if the
weights are relatively even and thus reduces the noise being
introduced through the resampling. This results in measures of the
evenness of the particles such as the ESS to fluctuate over time.

In contrast to this, chopthin can be executed at every step of a
particle filter. This is because chopthin evens out the weights less
than existing schemes. It will not alter the weights much (or at all)
if they are already relatively even. Using it at every step leads
to less fluctuation in the unevenness of the weights over time. Figure
\ref{fig:illustresampl} (later in the paper) illustrates this in an
example by looking at the ESS over time.

Chopthin can be implemented efficiently. Indeed, we present one
version of chopthin, which can be implemented in expected constant
linear effort in the number of particles.  

We begin by presenting the generic chopthin algorithm in Section
\ref{sec:alg}.  In Section \ref{sec:linear} we present a version of
the algorithm that has expected linear effort and show in a simulation
that its effort is comparable to other standard resampling methods.
Simulation studies are conducted in Section \ref{sec:sim1} that compares
the chopthin algorithm to other resampling schemes within a particle
filter. The results show that our new algorithm consistently
outperforms the other resampling methods. In Section
\ref{sec:controlESS} we prove that the algorithm implicitly controls
the ESS.

Implementations of chopthin are available: as an R-package (chopthin
on CRAN), as a python package (on the python package index), as
C\texttt{++} code and as a Matlab extension file (homepage of the
first author).

\section{The Generic Algorithm}\label{sec:alg}
Before introducing the chopthin algorithm, we first present the
constraints that it satisfies. Denote the $n$ particle weights
before resampling as $w_1,\dots,w_n$ and let $\mathcal{G}$ be the
$\sigma$-field generated by $w_1,\dots,w_n$. Further, denote the $N$
weights after resampling as $\widetilde{w}_1,\dots,\widetilde{w}_N$. Let
$C^i$ be the number of replicates of particle $i$. We want chopthin to
satisfy the following:\
\begin{enumerate}[(i)]
\item $\displaystyle\E(C^i \widetilde{w}_i| \mathcal{G}) = w_i$ $\forall i$\label{item:unbiasedness}\hfill (Unbiasedness)
\item $\displaystyle\sum_{i=1}^nC^i=N$\label{item:target-count} \hfill (Target count)
\item $\displaystyle\sum_{i=1}^nw_i = \sum_{i=1}^N\widetilde{w}_i$\label{item:converse-weight} \hfill (Conserve weight)
\item $\displaystyle\frac{\widetilde{w}_i}{\widetilde{w}_j}\leq \eta $ $\forall i,j$\label{item:bound-ratio}\hfill (Bounded ratio)
\end{enumerate}
Property (\ref{item:unbiasedness}) is an unbiasedness condition
ensuring the expected total weight of the offspring of a particle is
equal to its original weight. Property (\ref{item:target-count})
ensures that exactly $N$ particles are returned after
chopthin. Typically, $N=n$ i.e.\ the number of particles is
conserved. Properties (\ref{item:unbiasedness}) and
(\ref{item:target-count}) are satisfied by other resampling methods
\citep{douc2005comparison}. Property (\ref{item:converse-weight})
ensures that the total sum of the weights before and after resampling
are equal. Property (\ref{item:unbiasedness}) and
(\ref{item:converse-weight}) ensure that any estimator based on the
normalised weights will be unbiased. Finally, property
(\ref{item:bound-ratio}) bounds the ratio of weights returned from
chopthin.

Algorithm \ref{alg:generic} is a generic version of chopthin. As input
it receives the  weights 
$(w_i)_{1:n}$ of $n$ particles, $\eta$, the desired upper
bound on the ratio between weights, and $N$, the number of particles
to be returned.

Every particle gets a (potentially) random number of descendants. For
a particle with weight $w$, the expected number of offspring from
chopthin will be $h_a^\eta(w)$, where
$h_a^{\eta}:[0,\infty)\to [0,\infty)$ is a given function which may
depend on $\eta$ and on a further threshold parameter $a$. To ensure
that $N$ particles are returned (in expectation), we need to find $a$
such that
\begin{equation}
\label{eq:defa}
  \sum_{i=1}^n h_a^\eta(w_i)=N.
\end{equation}

The mechanism that generates the descendants depends on the weight of
the particle as well as on the parameter $\eta$, which is specified
by the user, and the parameter $a$, which is determined by the algorithm.

The key steps of  Algorithm \ref{alg:generic} are:\
\begin{itemize}
\item[] \textbf{Find $a$} (Step \ref{algstep:a}):\ The parameter $a$
  will serve as a threshold parameter that determines which particles
  are ``thinned'' and which are ``chopped''.
\item[]\textbf{Thin} (Step \ref{algstep:thin}):\ Particles with
 weights below $a$ get ``thinned'', i.e.\ either have 1 offspring
 (with weight $a$) or 0 offspring. 
\item[]\textbf{Chop} (Step \ref{algstep:sys_resampling2}):\ Particles
  with weights above $a$ get ``chopped'', which means that they get
  subdivided into smaller pieces, dividing the total original weight.
\end{itemize}

\begin{algorithm}[tb]
\caption{Generic chopthin}\label{alg:generic}
\DontPrintSemicolon \KwIn{particle weights $(w_i)_{1:n}$; maximal
  weight ratio $\eta$; target number of particles $N$; function
  $h_a^\eta:[0,\infty)\to[0,\infty)$} \KwOut{ancestors $I\in \{1,\dots,n\}^N$, weights $\widetilde{w}\in [0,\infty)^N$} 

\nl Let $a$ be a solution
to $\sum_{i=1}^n h_a^\eta(w_i)=N$.\label{algstep:a} 

 Let $L=\left\{j: w_j < a  \right\}$ and $U=\left\{j: w_j \geq  a  \right\}$\;
Let $I=()$ and $\widetilde w =()$\;

\nl
Draw $u \sim U(0,1)$\tcp*{Thin}
\For{$i\in L$}{
  $u=u+h_a^\eta(w_i)$ \;
  \If{$u\geq 1$}{
    append $(i)$ to $I$ and $(a)$ to $\widetilde w$\;
    $u=u-1$\;
  }
}\label{algstep:thin}%

\nl Let $N_L=\text{length}(I)$, 
  $N_U=N-N_L-\sum_{i\in U}\left\lfloor h_a^\eta(w_{i}) \right\rfloor$\;
and 
  $
    \zeta := \frac{\sum_{i\in L}w_i-a N_L }{\sum_{i\in U}\fp{h_a^\eta(w_{i})} }
  $\label{algstep:xi}%

  \nl  Systematic$\left[\left\{\fp{h_a^\eta(w_{j})} ; j\in U \right\} ;N_U  \right]$; returns $m^{j}$ for $j\in U$.\; \label{algstep:sys_resampling2}
 
  \For{$i\in U$}{
    $c=\left\lfloor h_a^\eta(w_{i})\right\rfloor + m^i$\;
    append $(i,\dots,i)\in \N^{c}$ to $I$ \;
     $\widehat{w}_i = w_i + \zeta\fp{h_a^\eta(w_{i})}$\tcp*{Adjusted weight}
      append $(\frac{\widehat{w}_i}{c},\dots,\frac{\widehat{w}_i}{c})\in \R^{c}$ to $\widetilde w$ \tcp*{Chop}
    } 

 \Return{$I,\widetilde w$}\;
\end{algorithm}
The chopthin algorithm returns a vector of resampled weights
$(\widetilde{w}_i)_{1:N}$ and an integer vector, $I$, containing the
indices of resampled components of the original weights. Chopthin will
return weights between $a$ and $\eta a$. This way the bound on the
ratio of the weights, property (\ref{item:bound-ratio}), will be
satisfied.  

We now discuss Algorithm \ref{alg:generic} in detail.
In Step \ref{algstep:a} the threshold
parameter $a$ is found by solving \eqref{eq:defa}. This depends on the
choice of function $h_a^\eta$. Choosing $h_a^\eta$ and solving
\eqref{eq:defa} are discussed toward to end of this section and in
Section \ref{sec:linear}.

The thinning step (Step \ref{algstep:thin}) 
determines the new weight and the number of offspring for particles
with small weights, $w_i<a$. Descending particles will have weight
$a$, thus ensuring the range condition on the weights. The
unbiasedness property (\ref{item:unbiasedness}) requires
$h_a^\eta(w)=w/a$, uniquely determining $h_a^\eta$ in this range. The
number of offspring is determined by systematic resampling on
$\{h_a^\eta(w_i):w_i<a \}$, ensuring $0$ or $1$ descendants. 

Step \ref{algstep:thin} returns $N_L$ particles such that
$\E(N_L|\mathcal{G})=\sum_{i:w_i< a} w_i/a$.  The total weight of the
surviving thinned particles is $aN_L$. Thus, through the thinning
step, the total sum of the weights may have changed. We compensate for
this using $\zeta$ (step \ref{algstep:xi}) in the chopping step, thus
ensuring property (\ref{item:converse-weight})

The chopping step (Step \ref{algstep:sys_resampling2}) determines how
the large weights, $w_i\geq a$, are subdivided. Each large weight will
receive $c^i=\left\lfloor h_a^\eta(w_i)\right\rfloor+m^i$
offspring. The $m^i$ are determined by a second systematic resampling
step on the fractional parts $\fp{h_a^\eta(w_i)}$ where $\fp{x} := x -
\left\lfloor x \right\rfloor$. Performing systematic resampling on
these fractional parts ensures
$\E(C^i|\mathcal{G})=h_a^\eta(w_i)$. This holds because the expected
value of $m^i$ is $(\zeta/a +1)\fp{h_a^\eta(w_i)}$ and
$\E(\zeta|\mathcal{G})=0$.  Further, this resampling step will return
exactly $N_U$ particles. The value of $N_U$ is selected such that the
total number of offspring produced from the entire algorithm is
exactly $N$ (step \ref{algstep:xi}). Thus property
(\ref{item:target-count}) is satisfied. Before chopping, the original
weight is first adjusted using $\zeta$. The adjusted weight is
$\widehat{w}_i=w_i+\zeta\fp{h_a^\eta(w_i)}$. This adjustment ensures
that the totals sum of the weights is conserved, property
(\ref{item:converse-weight}) and that the chopped weights are unbiased
(\ref{item:unbiasedness}).

The restriction that the chopped weights are between $a$ and
$\eta a$ requires
$$
a\leq \frac{\widehat{w}}{\lceil c \rceil} \text{ and }
\frac{\widehat{w}}{\lfloor c \rfloor}\leq \eta a,\quad\forall w\geq a
$$
where $\widehat{w}=w-\zeta\fp{h_a^\eta(w)}$ is the adjusted original
weight and $c$ is the number of offspring. These constraints
define an area, $A$, in the (adjusted) weight-count space
where
\begin{equation*}
  A:=\left\{(\widehat{w},c):   \left\lceil \frac{\widehat{w}}{\eta a}\right\rceil \leq c \leq   \left\lfloor\frac{\widehat{w}}{a}\right\rfloor, w\geq a\right\}.
\end{equation*}
This region is illustrated in Figure \ref{fig:chop_regions} by the
light grey area with black border for the case $\eta=4.5$. This area
is only valid if $\eta\geq 2$.
\begin{figure*}[tb]
  \centering
  \includegraphics[width=0.75\textwidth]{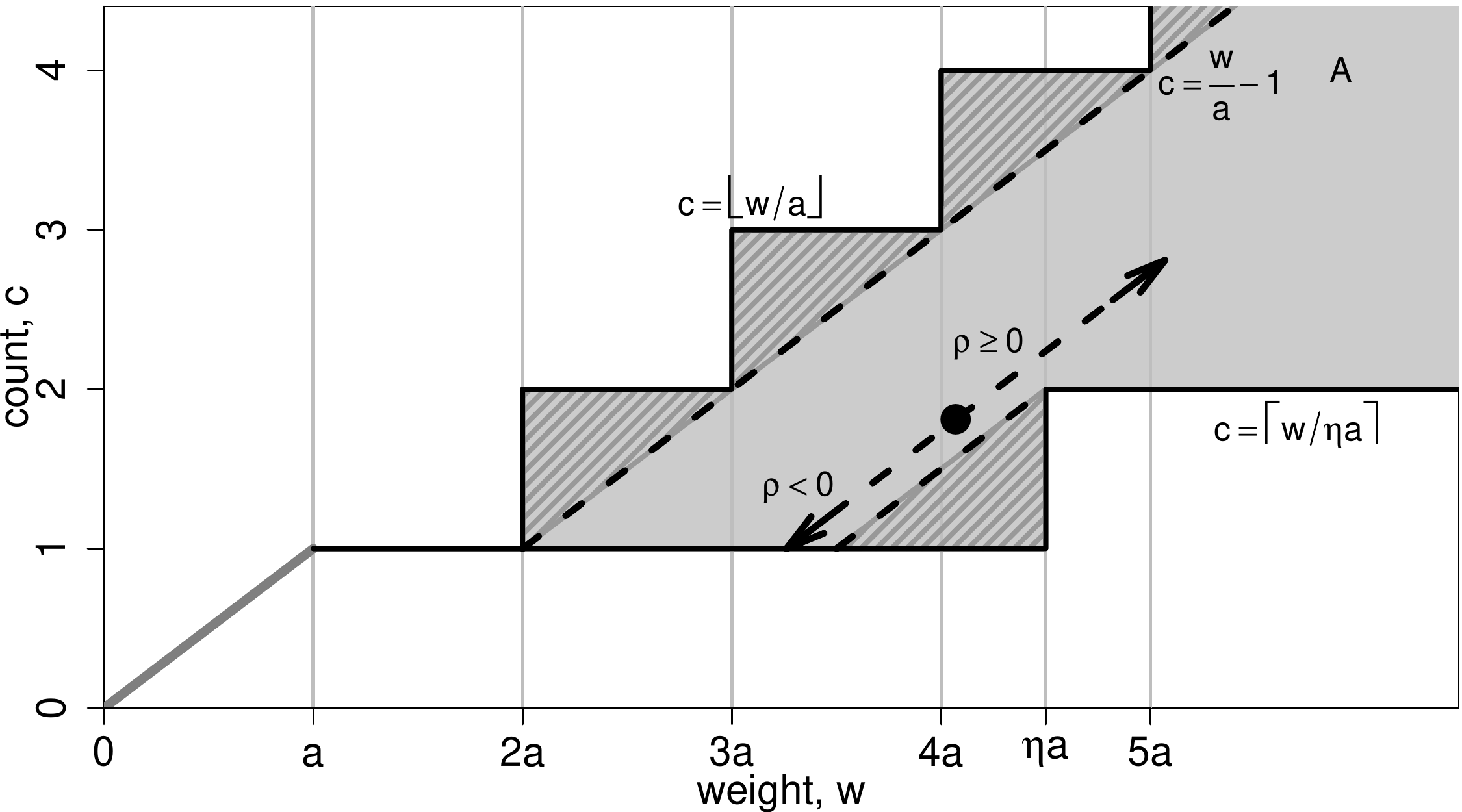}
  \caption{Illustration of allowable chopping region, $A$, represented
    by gray area with black border. The black point represents a
    candidate point $(w,h_a^\eta(w))$, the arrows represent movement
    from the point by the adjustment vector $\rho (1,1/a)$. The dark gray
    lined areas represent regions where $(w,h_a^\eta(w))$ cannot
    lie. All dashed lines have a gradient of
    $1/a$.}\label{fig:chop_regions}
\end{figure*}

To show that the $(\widehat{w},c)$ used in Algorithm \ref{alg:generic}
lies in $A$, we first write the adjusted weight and count as
\begin{equation}\label{eq:5}
  \left(\begin{array}{c}
    \widehat{w}\\
    c
  \end{array}\right):=\left(\begin{array}{c}
    w \\
    h_a^\eta(w)
  \end{array}\right)+\rho\left(\begin{array}{c}
    1\\
    1/a
  \end{array}\right),
\end{equation}
where $\rho:=\zeta\fp{h_a^\eta(w)}$. We shall refer to the
vector $\rho (1,1/a)$ as the \textit{adjustment} vector as it adjusts
the original weight, $w$, to the adjusted weight, $\widehat{w}$.

The requirement that $(\widehat{w},c)\in A$ leads to constraints on
$h_a^\eta(w)$. Beside choosing $h_a^\eta(w)$ such that
$(w,h_a^\eta(w))\in A$ we also need to ensure that the adjusted weight
$(\widehat{w},c)$ is also in $A$. Possible constraints ensuring this are
\begin{equation*}
  h_a^\eta(w)\leq \frac{w}{a}-1
\end{equation*}
and 
\begin{equation}\label{eq:1}
  h_a^\eta(w)\geq \frac{w}{a} - m(\eta-1)+1
\end{equation}
for $a(\eta m-1)<w<\eta am$, $m=1,2,\dots$. The regions where
$(w,h_a^\eta(w))$ are not allowed are represented by the dark grey
lined areas in Figure \ref{fig:chop_regions}. An alternative choice
for $h_a^\eta$ could be to choose such that
$\fp{h_a^\eta(w)}=0$ for all $w$ so that $\rho=0$ (see end
of this section). In this case, it is sufficient that
$(w,h_a^\eta(w))\in A$.

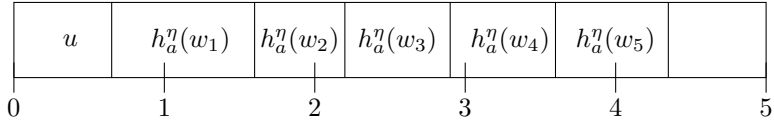
\begin{figure*}[btp]\centering  
  \begin{tikzpicture}[xscale=2]
   \draw (0,0) -- (5,0) -- (5,1) -- (0,1) -- (0,0);
   \foreach \x in {0,1,2,3,4,5} { 
         \draw(\x,-0.2)--(\x,0.2);
         \draw (\x, -0.4) node {\x};
    }
   \foreach \x in {0.65,1.6,2.2,2.9,3.6,4.35}{
          \draw (\x,0) --(\x,1);
   }
   \draw (0.375,0.5) node{$u$};
   \foreach \x[count=\mycount from 1] in {1.175,1.9,2.55,3.3,4}{
          \draw (\x,0.5) node{$h_a^{\eta}(w_{\mycount})$};
   }
  \end{tikzpicture}
\caption{\label{fig:illustralggeneric}Illustration of systematic resampling in Algorithm \ref{alg:generic}.}
\end{figure*}

Figure \ref{fig:illustralggeneric} illustrates the
systematic resampling used in the first for-loop in Algorithm
\ref{alg:generic}, where $h_a^\eta(w)=w/a$ denotes the expected number
of offspring for a particle with current weight $w < a$. This depends
on the threshold $a$. All particles have $h_a^{\eta}(w_i) <
1$. Particle 1, 2, 4 and 5  each get one descendent and 
particle 3 receives no descendant.

We have considerable freedom in choosing $h_a^{\eta}$ for $w\geq
a$. One natural choice would be
\begin{equation}
\label{eq:hbasic}
  h^\eta_a(w)=
  \begin{cases}
    w/a &  \text{if  } w<a\\
     \lceil w/( \eta a)\rceil  &   \text{if  } w \geq a
  \end{cases}
\end{equation}
illustrated in Figure \ref{fig:ha-funtion}. We call the resulting
algorithm \emph{step-chopthin}. The requirement that
$(w,h^\eta_a(w))\in A$ implies $\lceil w/( \eta a)\rceil \leq \lfloor
w/a \rfloor$ for all $w\geq a$. Considering values of $w$ slightly
less than $2a$ implies $\eta\geq 2$.

This choice does not guarantee the existence of a solution $a$ of
\eqref{eq:defa} due to the discontinuities. Instead of having an exact
solution, one could use an approximate solution, using a numerical
root finding algorithm, but this would not guarantee that the desired
number of particles is returned property (ii).

\begin{figure*}[tb]
  \centering
  \includegraphics[width=0.8\textwidth]{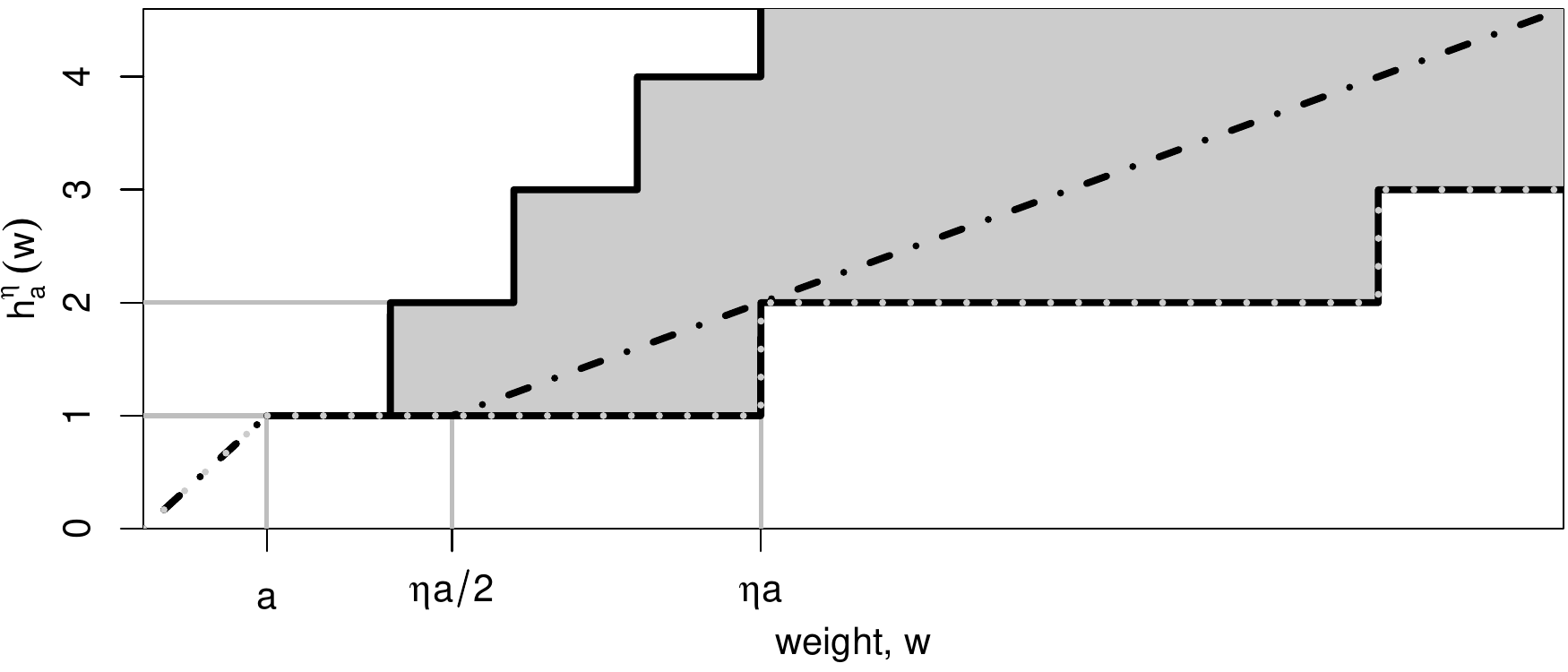}
  \caption{Expected number $h_a^\eta$ of offspring as a function of the weight. Dot-dashed: chopthin (\ref{eq:ha-exact}); gray dotted: step-chopthin (\ref{eq:hbasic}).}\label{fig:ha-funtion}
\end{figure*}

\section{Implementation in expected linear time}
\label{sec:linear}

In this section we present our main version of the algorithm, which we
simply call \emph{chopthin}. For this we choose $h_a^{\eta}$ such that
it is continuous (in $a$) and such that \eqref{eq:defa} can be solved for $a$
in expected linear effort. Consider the function
\begin{equation}\label{eq:ha-exact}
  h_a^\eta(w)=
  \begin{cases}
    w/a & \text{if  } w<a\\
    1 &   \text{if  }a\leq w<\eta a/2\\
    2w/\eta a &  \text{if  } w \geq \eta a/2
  \end{cases}
\end{equation}
which is depicted in Figure \ref{fig:ha-funtion}. The requirement that
$(w,h^\eta_a(w))\in A$ implies $2w/\eta a \leq \lfloor w/a \rfloor$
for all $w\geq a$. Considering values of $w$ slightly less than $2a$
implies $\eta\geq 4$.

\begin{algorithm}[tb]
\caption{Fast determination of $a$}\label{alg:fast}
\DontPrintSemicolon
\KwIn{particle weights $w_i$; maximal weight ratio $\eta$; target
  number of particles $N$}
\KwOut{
 $a>0$ such that 
$\sum_{i=1}^n h_a^\eta(w_i)=N$}
$w^u=w^l=w$, $s^l=0$, $c^m=0$, $s^u=0$, $c^u=0$\;
\While{$w^u\neq \emptyset$ or $w^l\neq \emptyset$}{
  \leIf{$|w^l|\geq | w^u|$}{
    sample $a$ uniformly  from $w^l$ and let
    $b=\eta a /2$\;
    }{
    sample $b$ uniformly  from $w^u$ and let
    $a=2b/\eta$
    }
  $h=s^l/a+\sum_{v\in w^l}\min(v/a,1)+c^m+\sum_{v\in
    w^u}\max(\frac{v}{b}-1,0)+s^u/b-c^u$\;
  \lIf{h=N}{\Return a}
  \eIf{$h>N$}{
    $s^{l}=s^l+\sum_{v\in w^l}v\ind(v\leq a)$\;
    $w^l=\{v\in w^l;v>a\}$, 
    $w^u=\{v\in w^u;v>b\}$\;
  }{
    $c^{m}=c^m+\sum_{v\in w^l}\ind(v\geq a)$, 
    $s^{u}=s^u+\sum_{v\in w^l}v\ind(v\geq b)$, 
    $c^{u}=c^u+\sum_{v\in w^l}\ind(v\geq b)$\;
    $w^l=\{v\in w^l;v<a\}$,
    $w^u=\{v\in w^u;v<b\}$\;
  }
}
\Return $a=\frac{s^l+2 s^u/\eta}{N-c^m+c^{u}}$
\end{algorithm}

We use Algorithm \ref{alg:fast} to solve $\sum_{i=1}^nh_a^\eta(w_i)=N$
using \eqref{eq:ha-exact} for $a$. Lemma \ref{lemma:algo-effort}
proves that the expected effort of Algorithm \ref{alg:fast} is linear
in $n$, and overall the expected computational effort is
$O(\max(n,N))$.

 Algorithm \ref{alg:fast} works by determining which weights are
above or below $a$ and which weights are above or below $\eta a$,
without fully knowing $a$ yet. Due to the piecewise linear structure
of $h_a^{\eta}$, the contributions of weights for which this
determination has been made can be easily kept track of by the number
and the sum of those particles (see $s^l$, $c^m$, $s^u$, $c^u$ and the
computation of $h$ in Algorithm \ref{alg:fast}). The algorithm
maintains two lists --- $w^l$, the weights for which we do not know yet
whether they are above or below $a$, and $w^u$, the weights for which
we do not know yet whether they are above or below $\eta a/2$.  The
exact value of $a$ is only determined when $h=N$ or when both $w^l$
and $w^u$ are empty.  At every iteration, a new candidate for $a$ or
$b=\eta a /2$ is selected from the longer of $w^l$ and $w^u$. Depending
on whether $h>N$ or $h<N$ the algorithm then removes elements from
$w^l$ and $w^u$ and updates the counts/sums of decided weights.
See 
Table \ref{fig:exalgfast} for an illustrative run through of Algorithm
\ref{alg:fast} for $N=n=5$.

\begin{table*}[tb]
      \caption{\label{fig:exalgfast}Example run of Algorithm \ref{alg:fast} with $N=n=5$ and $\eta=4$.}
      \begin{center}
  \begin{tabular}{c|c|c|c|c}
    $w_l$&$a$&$w_u$&$b$&$h$\\\hline

$\{0.1,0.3,0.5,\framebox{0.9},1\}$&0.9&
$\{0.1,0.3,0.5,0.9,1\}$&1.8& 3\\

$\{0.1,0.3,0.5\}$&0.15&
$\{0.1,\framebox{0.3},0.5, 0.9,1\}$&0.3& 9.67\\

$\{0.3,0.5\}$&0.25&
$\{\framebox{0.5},0.9,1\}$&0.5& 6.2\\

$\{\framebox{0.3},0.5\}$&0.3&
$\{0.9,1\}$&0.6& 5.5\\

$\{0.5\}$&0.5&
$\{0.9,\framebox{1}\}$&1& 3.8\\

$\emptyset$&0.45&$\{\framebox{0.9}\}$&0.9&3.89\\
$\emptyset$&&$\emptyset$&&\\
  \end{tabular}\\
$\framebox{\phantom{3} }$ randomly chosen element
\end{center}

\end{table*}

\begin{lemma}\label{lemma:algo-effort}
  The expected effort of Algorithm \ref{alg:fast} is $O(n)$.
  The expected effort of Algorithm \ref{alg:generic} together with Algorithm \ref{alg:fast} is $O(\max(n,N))$.
\end{lemma}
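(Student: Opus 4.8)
The plan is to recognise Algorithm~\ref{alg:fast} as a randomised selection procedure of quickselect type. Since $h_a^\eta$ in \eqref{eq:ha-exact} is piecewise linear in $a$ with breakpoints only at $a=w_i$ and $a=2w_i/\eta$, the map $a\mapsto\sum_i h_a^\eta(w_i)$ is continuous, non-increasing, and explicitly solvable on each of the $O(n)$ intervals between consecutive breakpoints; locating $a$ therefore amounts to identifying the interval that contains the solution $a^\ast$ of \eqref{eq:defa} and then solving the resulting linear equation (the final return statement). The cost of one pass of the while-loop is $O(|w^l|+|w^u|)$, dominated by the recomputation of $h$ and the partitioning of the two lists. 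Writing $n_t:=|w^l|+|w^u|$ for the value at the start of iteration $t$, the expected running time is $O\!\left(\E\sum_t n_t\right)$, so it suffices to show that $n_t$ contracts by a constant factor in expectation.

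To set this up I would maintain the invariant that, at every stage, there is a current enclosing interval $(\alpha,\beta)\ni a^\ast$ for which $w^l=\{w_i:\alpha<w_i<\beta\}$ and $w^u=\{w_i:\eta\alpha/2<w_i<\eta\beta/2\}$; that is, the two lists are the weights falling into two scaled copies of the same uncertainty interval for $a^\ast$. Because $h$ is non-increasing in $a$, the test $h\gtrless N$ reveals on which side of the chosen pivot $a^\ast$ lies, and the update discards exactly the elements of both lists that are thereby decided, retaining in each list the side that still contains $a^\ast$. Restricted to the list from which the pivot is drawn, this is precisely the elimination step of randomised quickselect applied to a uniformly random pivot.

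The key estimate is the contraction of $n_t$. Suppose $|w^l|\ge|w^u|$, so the pivot $a$ is drawn uniformly from $w^l$ (the other case is symmetric), and let $\mathcal F_t$ denote the history of the first $t$ iterations. With probability at least $\tfrac12$ the rank of $a$ in $w^l$ falls in its middle half, in which case both sides, and in particular the retained side, have size at most $\tfrac34|w^l|$; hence $\E(|w^l|_{\mathrm{new}}\mid\mathcal F_t)\le\tfrac78|w^l|$. The other list can only shrink, $|w^u|_{\mathrm{new}}\le|w^u|$. Using $|w^l|\ge n_t/2$ I obtain
\begin{equation*}
  \E(n_{t+1}\mid\mathcal F_t)\le\tfrac78|w^l|+|w^u|=n_t-\tfrac18|w^l|\le\bigl(1-\tfrac1{16}\bigr)n_t .
\end{equation*}
Iterating and taking expectations gives $\E\,n_t\le(15/16)^t n_0$ with $n_0=2n$, so $\E\sum_t n_t=O(n)$ and the expected effort of Algorithm~\ref{alg:fast} is $O(n)$. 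I expect the main obstacle to be exactly this step: the pivot is drawn from one list but simultaneously thresholds the other through $b=\eta a/2$, so the standard quickselect bound does not apply to $n_t$ directly. The device that rescues the argument is the rule of pivoting on the longer list, combined with the fact that the non-pivoted list never grows; together these force a constant-factor expected decrease of the total even though the non-pivoted list is not itself being bisected.

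For the combined bound, Algorithm~\ref{alg:fast} returns $a$ in expected $O(n)$ time, and the remainder of Algorithm~\ref{alg:generic} consists of forming $L$ and $U$, the thinning loop over $L$, the two systematic resampling steps, and the writing of the output. Reading and processing the $n$ input weights costs $O(n)$, each systematic resampling is linear in its input plus output, and the offspring loop writes exactly $N$ particles by construction (property~(\ref{item:target-count})); hence producing the output costs $O(N)$. Summing gives $O(n+N)=O(\max(n,N))$ for Algorithm~\ref{alg:generic} together with Algorithm~\ref{alg:fast}.
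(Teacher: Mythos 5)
Your proposal is correct and follows essentially the same route as the paper: bound the cost of each iteration by $|w^l|+|w^u|$, show that pivoting on the longer list contracts this total in expectation by a fixed factor, and sum the resulting geometric series, with the remaining work of Algorithm~\ref{alg:generic} accounted for as $O(\max(n,N))$. The only difference is cosmetic: your ``middle-half'' quickselect estimate gives the contraction factor $15/16$ where the paper's exact conditional-expectation computation gives $7/8$, which is immaterial to the $O(n)$ conclusion.
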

\begin{proof}
  We  use a subscript to denote iterations in Algorithm
  \ref{alg:fast} with $w^l_1=w=w^u_1$. 
The effort in the $i$th iteration of the while-loop is proportional to the number of elements in 
$w^l_i$ and $w^u_i$. Thus the overall effort is proportional to 
$$\sum_{i=1}^{\infty}( |w_i^l|+ | w_i^u |)
$$
Consider iteration $i$. The following  statements are conditional on the sets $w_{i-1}^l,w_{i-1}^u$. 
Suppose that $| w_{i-1}^l| \geq  | w_{i-1}^u|$. We show that
$\E( | w_{i}^l| )\leq (3/4) | w_{i-1}^l| $. Let $a$ be the randomly selected element from $w_{i-1}^l$.
Let $a^{\ast}$ be such that $\sum_{i=1}^n h_{a^{\ast}}^\eta(w_i)=N$.
Let $M^l=\left|\{v\in w_{i-1}^l:v<a^{\ast}\}\right|$,  $M^u=|\{v\in w_{i-1}^l:v>a^{\ast}\}|$. We then have
\begin{align*}
\E&( | w_{i}^l| )=\\
=&\E\left(\! | w_{i}^l| \biggr\vert a<a^{\ast} \!\right)\frac{M^l}{| w_{i-1}^l|}+\E\left(\!| w_{i}^l|\biggr\vert a>a^{\ast}\! \right)\frac{M^u}{| w_{i-1}^l|}\\
=&\left(\frac{M^l}{2}+M^u\right)\frac{M^l}{|w_{i-1}^l|}+\left(\frac{M^u}{2}+M^l\right)\frac{M^u}{|w_{i-1}^l|}\\
=&\frac{(M^l+M^u)^2+2M^lM^u}{2 |w_{i-1}^l|}\leq \frac{1}{2} |w_{i-1}^l|+\frac{1}{4}|w_{i-1}^l|=\frac{3}{4} |w_{i-1}^l|
\end{align*}
Hence, $\E(|w_i^l|+|w_i^u|)\leq \frac{3}{4} |w_{i-1}^l| + |w_{i-1}^u|
\leq \frac{7}{8}(|w_{i-1}^l|+|w_{i-1}^u|)$ as $| w_{i-1}^l| \geq |
w_{i-1}^u|$. Similarly, it can be seen that the above also holds if
$|w_{i-1}^l| < |w_{i-1}^u|$.

Thus, 
\begin{align*}
\E(|w_i^l|&+|w_i^u|)=
\E\left(\E\left[|w_i^l|+|w_i^u| \biggr\vert w_{i-1}^l,w_{i-1}^u\right]\right)\\
&\leq \frac{7}{8}\E(|w_{i-1}^l|+|w_{i-1}^u|)\\
&\leq\dots\leq\left( \frac{7}{8} \right)^{i-1}\E(|w_{1}^l|+|w_{1}^u|)=\left( \frac{7}{8} \right)^{i-1}2n
\end{align*}
and therefore
\begin{align*}
\E &\left[ \sum_{i=1}^{\infty}(|w_i^l|+|w_i^u|) \right]\\
&\leq \sum_{i=1}^{\infty}\left( \frac{7}{8} \right)^{i-1}2n
=2n \frac{1}{1-7/8}=16n.
\end{align*} 
This shows that the expected effort of Algorithm \ref{alg:fast} is
$O(n)$.  The remainder of Algorithm \ref{alg:generic} entails
generating the output of length $N$ and it runs through all $n$
particles with an overall effort of $O(\max(n,N))$. Thus the expected
effort of the combined Algorithms \ref{alg:generic}, \ref{alg:fast} is
$O(\max(n,N))$.\qedhere
\end{proof}

\begin{table}[tb]
  \centering
  \caption{\label{tab:effort}Effort of resampling $N$ particles
    divided by the effort to generate $N$ exponentially distributed random variables in R}
\begin{tabular}{l|p{1pt}rrrr}
  \hline
\hfill $N$&  &$1000$ & $10000$ & $10^5$ & $10^6$ \\ 
  \hline
chopthin && 1.77 & 1.53 & 1.53 & 1.64 \\ 
  systematic && 0.43 & 0.34 & 0.35 & 0.35 \\ 
  multinomial (sample.int) && 0.88 & 0.89 & 1.02 & 1.36 \\ 
  multinomial (cond. Binomial) && 1.81 & 1.90 & 1.92 & 2.03 \\ 
   \hline
\end{tabular}
\end{table}

We now compare the effort of chopthin to the effort of sampling with
replacement (multinomial resampling), via the in-built function
\texttt{sample.int} in R and via a method using conditional Binomial
distributions \citep{DAVIS1993205} and a (fast) C\texttt{++}-based
implementation of systematic resampling.

We simulated $N$ weights from an Exponential distribution,
i.e.  $w_i\sim \text{Exp}(1)$, $i=1,\dots,N$ independently. We then
applied the resampling procedures to the simulated weights. 

Table \ref{tab:effort} reports the mean effort of the resampling
procedures over 10000 repetitions. The reported effort is relative to
the effort to generate the weights (a call of  the
in-built R function rexp). Constant values indicate that the effort is
linear in $N$, as the effort of generating the random variables is
linear in $N$.

Systematic, chopthin and multinomial resampling (the conditional
Binomial implementations) are all approximately linear in $N$. As
expected, chopthin is more computationally demanding than systematic
resampling as part of the chopthin algorithm consists of systematic
resampling steps. 

Nevertheless, the computational effort of chopthin
is very moderate, only slightly more  than generating
exponentially distributed random variables.

\section{Simulations}\label{sec:sim1}
We now compare the performance of chopthin to other resampling methods
within a particle filter. We also vary the bound of the ratio on
the weights, $\eta$, and illustrate that chopthin results in a less
variable ESS.

\subsection{Linear Gaussian Model}\label{sec:linear-model}
Consider a model with hidden Markov process $X_t\in \mathbb{R}$ and observed process $Y_t\in\mathbb{R}$ for $t\in\mathbb{N}$. In this section, we are interested in the model
\begin{equation*}
\begin{cases}
  X_t &= X_{t-1} + \epsilon_t,\quad \epsilon_t \stackrel{\text{iid}}{\sim} N(0,1)\\
  Y_t& = X_t + \xi_t,\quad \xi_t\stackrel{\text{iid}}{\sim} N(0,\sigma^2_Y)
\end{cases}
\end{equation*}
with $X_0\sim N(0,1)$ and known $\sigma_Y>0$. For this model the Kalman filter \citep{kalman1960new} gives  the exact
conditional distribution, giving us a  benchmark.

\subsubsection{Simulation}\label{sec:linear-simulation}

We  use the particle filter in Algorithm \ref{algo:smc}
to give estimates of the hidden states $X_1,\dots,X_T$ based on the 
observations $y_1,\dots,y_T$. We select $p(x_t|x_{t-1})$ and
$p(y_t|x_t)$ as indicated by the linear Gaussian model. We are
interested in the posterior $X_t|y_1,\dots y_t$ for
$t=1,\dots,T$.  Resampling is performed if the ESS drops
below $\beta\in [0,N]$. The ESS of  a weight vector,
$w=(w_1,\dots,w_n)$ is  defined as
\begin{equation*}
  \ESS(w)=\frac{\left(\sum_{i=1}^nw_i \right)^2}{\sum_{i=1}^nw_i^2}.
\end{equation*}
It is  often used in particle filters to trigger the
resampling step. 
 If $\beta = N$ then resampling is performed at every
step as $\ESS( w) \leq N$. Lastly, potentially any resampling scheme
$r$ can be used in Algorithm \ref{algo:smc}.

For a given $\sigma^2_Y$, resampling scheme $r$, target number of
particles $N$ and resampling trigger $\beta$, a single iteration of
the simulation is conducted as follows:\ simulate from the model
$T=1000$ observations; $y_1,\dots,y_T$.  Using this realisation of
observations, run the particle filter to give estimates of the hidden
states $X_1,\dots,X_T$.  Lastly, the Kalman filter is run to obtain
the exact conditional distribution. We use $M=1000$
iterations. The simulation is conducted using combinations of the
parameters: $\sigma_Y$, $N$, $\beta$, $\eta$ (for chopthin) and various
resampling schemes. 

\begin{algorithm}[tb]
\caption{Particle filter}\label{algo:smc}
\DontPrintSemicolon
\KwIn{target number of particles $N$; ESS threshold $\beta$; resampling scheme $r$; observations $y_1,\dots,y_T$.}
\KwOut{ weighted particles $(w_i,\widetilde{x}^{(i)}_{t})_{1:n_i}$ for $t=1,\dots,T$}
Sample $\widetilde{x}_0^{(i)}\sim p(x_0)$, $i=1,\dots,N$ \;
$w_i = 1$, $i=1,\dots,N$ \;
Let $n_0=N$\;
\For{$t=1,\dots,T$}{
Sample $\widetilde{x}_t^{(i)}\sim p(x_t|\widetilde{x}_{t-1}^{(i)})$, $i=1,\dots,n_{t-1}$ \;
$w_i =  w_i p(y_t | \widetilde{x}_t^{(i)})$, $i=1,\dots,n_{t-1}$ \;
\If{$\ESS(w)\leq\beta$}{
Run $r$ with a target of $N$ particles to get a set of particles $(w_i,\widetilde{x}^{(i)}_{t})_{1:n_t}$ \;
Normalise weights such that $\sum_{i=1}^{n_t}w_i=N$
}
}
\end{algorithm}

\newpage
\subsubsection{Illustration of One Run}\label{sec:onerun}

\begin{figure*}[tb]
  \centering
  \includegraphics[width=0.95\linewidth]{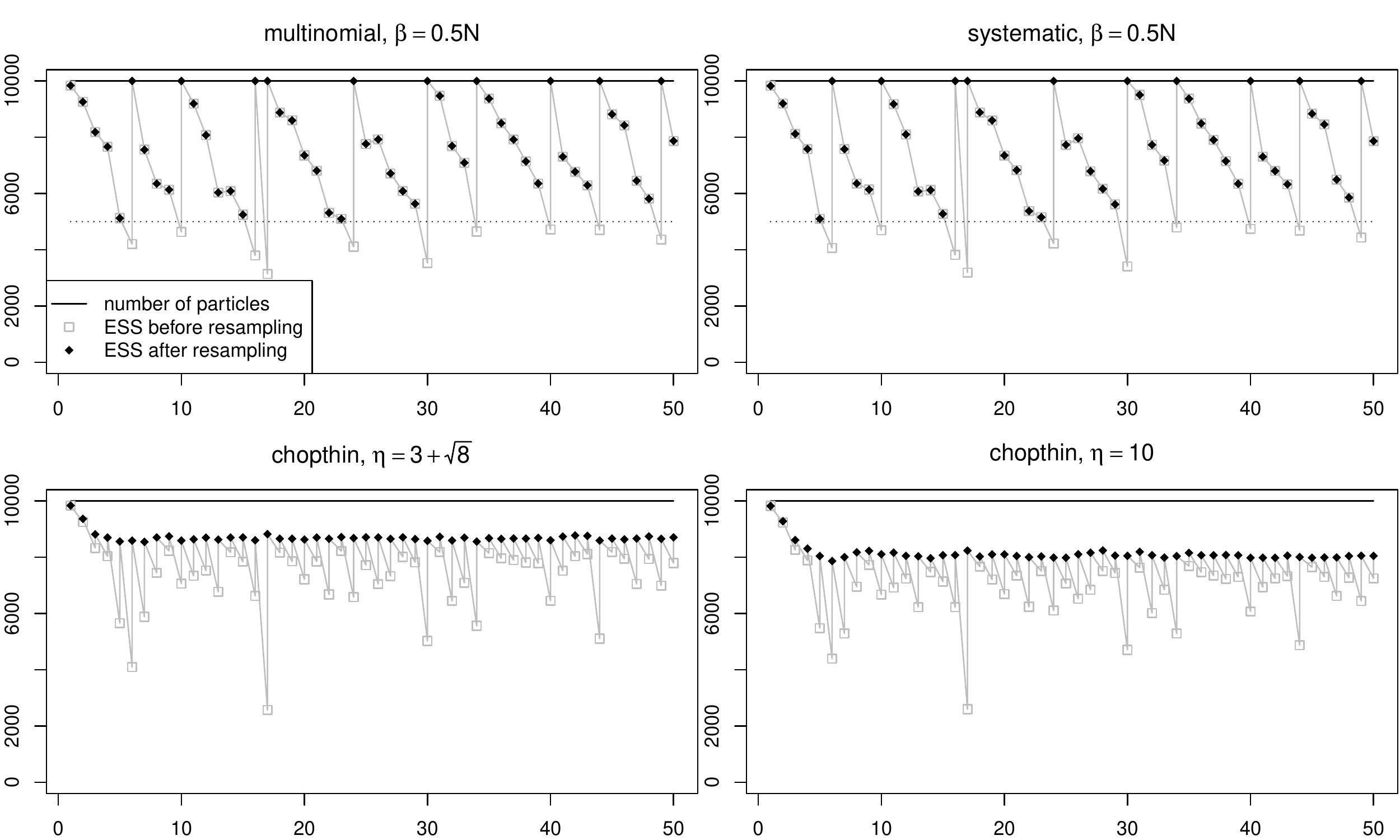}
  \caption{ESS before and after resampling for selected resampling schemes (one realisation).}
  \label{fig:illustresampl}
\end{figure*}

Figure \ref{fig:illustresampl} considers the effect of different
resampling schemes on the ESS during the first 50 steps of one
realisation of the particle filter (Algorithm \ref{algo:smc}) with
$N=10000$ target particles. It plots the ESS before and after
resampling.  As resampling for the multinomial and systematic algorithm
 only occurs if the ESS has dropped below $0.5N$, the ESS is far
more variable than in the chopthin algorithm. For both
$\eta=3+\sqrt{8}$ and $\eta=10$,  the chopthin algorithm
after resampling stays significantly above its theoretical lower bound
(given in Section \ref{sec:controlESS}), which is $0.5N$ and $0.33N$,
respectively.  Also the two choices of  $\eta$ within chopthin
lead to similar behaviour.

\subsubsection{Results}\label{sec:linear-results}
 Table
\ref{tab:MSEs} shows the results of the full simulation for  the following resamplers: chopthin,
multinomial resampling (resampling with replacement), branching
\citep[p.\ 278]{fundamentals}, stratified sampling, standard
residual sampling (multinomial resampling of the residuals), residual
sampling with stratified resampling of the residuals and systematic
resampling.

For each iteration, we obtain the estimated posterior
mean of $X_t$ for $t=1,\dots,T$. For a given $\sigma_Y$, $N$
and $\beta$, denote the estimated posterior mean from iteration $i$,
at time $t$, for resampling scheme $r$ as
$\widetilde{\mu}_{i,t,r}$. Further,  $\mu_{i,t}$ denotes the true posterior mean at
time $t$ given by the Kalman filter. We report the
approximate mean squared error (MSE) for resampling scheme $r$ as
\begin{equation*}
  \frac{1}{M}\sum_{i=1}^{M}\left\{\frac{1}{T}\sum_{t=1}^T(\widetilde{\mu}_{i,t,r} - \mu_{i,t})^2 \right\}.
\end{equation*}
The MSE values, presented in Table \ref{tab:MSEs}, are divided by the
MSE given by the systematic resampling. The results show that using
the chopthin algorithm at every step ($\beta=N$) and using the trigger
($\beta=0.5N$) with various values for the ratio bound $\eta$
consistently achieves a lower MSE than the other resampling
methods. The simulations using $\sigma_Y=1/3$ is based on a setting
where there is a small amount of noise between the state and
observation. In this case, the particle filter will be resampling at
nearly every step for all methods. Chopthin with
$\beta=0.5N$ is included in these simulations to support  our suggestion that
 chopthin should be used in every iteration of a particle filter.
\begin{table*}[tb]
{\centering
\renewcommand{\tabcolsep}{2.5pt}
\caption{Simulations Results - linear Gaussian model: MSE values for various simulation parameters and different resampling methods. Presented MSE values are divided by the MSE from the simulations using systematic resampling.}\label{tab:MSEs}
\begin{tabular}{lll|lllll|llll|llll}
   &  &  & $N$ & 100 & 100 & 100 & 100 & $10^3$ & $10^3$ & $10^3$ & $10^3$ & $10^4$ & $10^4$ & $10^4$ & $10^4$ \\ 
   & $\beta$ & $\eta$ & $\sigma_Y$ & 1/3 & 1 & 3 & 9 & 1/3 & 1 & 3 & 9 & 1/3 & 1 & 3 & 9 \\ 
   \hline
  chopthin & $N$ & 4 &  & {$ { 0.99 }$} & {$ \underline{ 0.90 }$} & {$ \underline{ 0.88 }$} & {$ { 0.91 }$} & {$ { 0.98 }$} & {$ { 0.90 }$} & {$ \underline{ 0.89 }$} & {$ { 0.92 }$} & {$ { 0.97 }$} & {$ { 0.91 }$} & {$ { 0.90 }$} & {$ { 0.94 }$} \\ 
chopthin & $N$ & $3+\sqrt{8}$ &  & {$ { 0.97 }$} & {$ { 0.90 }$} & {$ \underline{ 0.86 }$} & {$ \underline{ 0.86 }$} & {$ { 1.00 }$} & {$ \underline{ 0.89 }$} & {$ \underline{ 0.86 }$} & {$ \underline{ 0.87 }$} & {$ { 0.95 }$} & {$ \underline{ 0.90 }$} & {$ \underline{ 0.89 }$} & {$ { 0.90 }$} \\ 
  chopthin & $N$ & 10 &  & {$ { 0.97 }$} & {$ { 0.92 }$} & {$ \underline{ 0.87 }$} & {$ \underline{ 0.85 }$} & {$ { 0.98 }$} & {$ { 0.91 }$} & {$ \underline{ 0.87 }$} & {$ \underline{ 0.86 }$} & {$ { 0.94 }$} & {$ { 0.93 }$} & {$ \underline{ 0.86 }$} & {$ \underline{ 0.85 }$} \\ 
  chopthin & $0.5N$ & $3+\sqrt{8}$ &  & {$ { 0.98 }$} & {$ { 0.98 }$} & {$ { 0.96 }$} & {$ { 0.94 }$} & {$ { 0.97 }$} & {$ { 0.98 }$} & {$ { 0.96 }$} & {$ { 0.94 }$} & {$ { 0.96 }$} & {$ { 0.98 }$} & {$ { 0.96 }$} & {$ { 0.94 }$} \\ 
  multinomial & $0.5N$ & - &  & {$ { 1.01 }$} & {$ { 1.05 }$} & {$ { 1.15 }$} & {$ { 1.21 }$} & {$ { 0.99 }$} & {$ { 1.04 }$} & {$ { 1.15 }$} & {$ { 1.24 }$} & {$ { 0.98 }$} & {$ { 1.04 }$} & {$ { 1.15 }$} & {$ { 1.22 }$} \\ 
  branching & $0.5N$ & - &  & {$ { 1.01 }$} & {$ { 1.01 }$} & {$ { 0.99 }$} & {$ { 0.99 }$} & {$ { 1.00 }$} & {$ { 0.99 }$} & {$ { 1.00 }$} & {$ { 1.01 }$} & {$ { 0.95 }$} & {$ { 1.01 }$} & {$ { 1.00 }$} & {$ { 1.00 }$} \\ 
  residual & $0.5N$ & - &  & {$ { 1.00 }$} & {$ { 1.00 }$} & {$ { 1.00 }$} & {$ { 0.99 }$} & {$ { 0.99 }$} & {$ { 1.00 }$} & {$ { 1.00 }$} & {$ { 1.01 }$} & {$ { 1.00 }$} & {$ { 1.00 }$} & {$ { 1.00 }$} & {$ { 1.01 }$} \\ 
  stratified & $0.5N$ & - &  & {$ { 1.00 }$} & {$ { 1.00 }$} & {$ { 1.01 }$} & {$ { 1.02 }$} & {$ { 0.99 }$} & {$ { 1.01 }$} & {$ { 1.02 }$} & {$ { 1.04 }$} & {$ { 0.96 }$} & {$ { 1.02 }$} & {$ { 1.01 }$} & {$ { 1.01 }$} \\ 
  residual-stratified & $0.5N$ & - &  & {$ { 1.00 }$} & {$ { 1.01 }$} & {$ { 1.01 }$} & {$ { 1.01 }$} & {$ { 0.97 }$} & {$ { 1.00 }$} & {$ { 1.01 }$} & {$ { 1.01 }$} & {$ { 0.96 }$} & {$ { 1.03 }$} & {$ { 1.00 }$} & {$ { 1.05 }$} \\ 
systematic & $N$ & - &  & {$ { 1.00 }$} & {$ { 0.96 }$} & {$ { 1.06 }$} & {$ { 1.37 }$} & {$ { 1.01 }$} & {$ { 0.96 }$} & {$ { 1.11 }$} & {$ { 1.44 }$} 
& {$ { 0.98 }$} & {$ { 1.00 }$} & {$ { 1.13 }$} & {$ { 1.47 }$} \\
  systematic & $0.5N$ & - &  & {$ { 1.00 }$} & {$ { 1.00 }$} & {$ { 1.00 }$} & {$ { 1.00 }$} & {$ { 1.00 }$} & {$ { 1.00 }$} & {$ { 1.00 }$} & {$ { 1.00 }$} & {$ { 1.00 }$} & {$ { 1.00 }$} & {$ { 1.00 }$} & {$ { 1.00 }$} \\ 
  \end{tabular}
}

\vspace{0.2cm}Underline: below 0.9.
\end{table*}
In similar simulations, not presented here, we compared the MSE of
using the chopthin with $\beta=N$ and systematic resampling with
various values of $\beta$. These simulations still showed that the
chopthin method consistently outperforms systematic resampling.

In general, chopthin appears to perform better than other resampling
methods, particularly when $\sigma_Y$ is large. This may be due to a
combination of factors. First, chopthin with $\beta=N$ keeps the
quality of the particle approximation more stable than methods using
the ESS as resampling trigger (see Figure \ref{fig:illustresampl}).
Second, compared to using a standard resampling scheme at every
iteration ($\beta=N$), chopthin leaves particles with weights between
$a$ and $\eta a/2$ unchanged; only thinning the particles with weights
less than $a$ and chopping those above $\eta a/2$. As a result, a
better particle system seems to be maintained.

\subsubsection{Estimation of the Likelihood}\label{sec:linear-estim-likel}
The likelihood of the observations $p(y_1,\dots,y_T)$ can be decomposed as 
$$
p(y_1,\dots,y_T)=\prod_{t=1}^T  p(y_t|y_{1:t-1}).
$$
The conditional distribution $p(y_t|y_{1:t-1})$ can be approximated
from these simulations the average of the weights; that is
\begin{equation*}
  \widehat{p}(y_t|y_{1:t-1}) = \frac{1}{N}\sum_{k=1}^Nw_k,
\end{equation*}
where the $w_k$ are the weights after the conditioning on the observation $y_t$.

Unbiased estimation of the marginal likelihoods, $p(y_{1:t})$, is
particularly important in particle MCMC methods \cite[e.g.\
][]{particleMCMC:2010,Doucet01062015,sherlock2015} in order to
preserve the correct invariant distribution. We conjecture that the
chopthin algorithm provides an unbiased estimate of the marginal
likelihood. A proof could be based on a decomposition similar to the
one used in \cite[Proposition 7.4.1]{moral2012feynman}.

For the model, presented in Section \ref{sec:linear-model}, the exact
marginal likelihood can be computed using the Kalman filter, providing
a comparison with the estimates given by the particle filter. For the
same run of the simulation conducted in Section \ref{sec:linear-results}, we
estimate the conditional likelihood as follows. Let $y_{1:t}^i$ denote
the observations simulated in iteration $i$ for $t=1,\dots,T$. Then
denote the estimate of $p(y^i_t|y^i_{1:t-1})$ for iteration $i$, for a
given resampling method, $\sigma_Y$, $N$ and $\beta$ as $
\widehat{p}(y^i_t| y^i_{1:t-1})$.  In Table \ref{tab:loglik} we report
the following MSE
\begin{equation*}
  \frac{1}{M}\left\{\sum_{i=1}^M\frac{1}{T}\sum_{t=1}^T\left(\log\widehat{p}( y_t^i| y^i_{1:t-1})-\log p(y^i_t|y^i_{1:t-1}) \right)^2\right\}.
\end{equation*}

\begin{table*}[tb]
{\centering\caption{Simulation Results - linear Gaussian model: MSE values of log likelihood for various simulation parameters and different resampling methods. Presented MSE values are divided by the MSE from the simulations using systematic resampling.}\label{tab:loglik}\renewcommand{\tabcolsep}{2.5pt}
\begin{tabular}{lll|lllll|llll|llll}
   &  &  & $N$ & 100 & 100 & 100 & 100 & $10^3$ & $10^3$ & $10^3$ & $10^3$ & $10^4$ & $10^4$ & $10^4$ & $10^4$ \\ 
   & $\beta$ & $\eta$ & $\sigma_Y$ & 1/3 & 1 & 3 & 9 & 1/3 & 1 & 3 & 9 & 1/3 & 1 & 3 & 9 \\ 
   \hline
chopthin & $N$ & $3+\sqrt{8}$ &  & {$ { 0.92 }$} & {$ \underline{ 0.88 }$} & {$ \underline{ 0.85 }$} & {$ \underline{ 0.86 }$} & {$ { 1.07 }$} & {$ \underline{ 0.88 }$} & {$ \underline{ 0.85 }$} & {$ \underline{ 0.87 }$} & {$ \underline{ 0.89 }$} & {$ { 0.91 }$} & {$ \underline{ 0.89 }$} & {$ \underline{ 0.90 }$} \\ 
  multinomial & $0.5N$ & - &  & {$ { 1.03 }$} & {$ { 1.07 }$} & {$ { 1.17 }$} & {$ { 1.22 }$} & {$ { 0.94 }$} & {$ { 1.07 }$} & {$ { 1.17 }$} & {$ { 1.24 }$} & {$ { 1.09 }$} & {$ { 1.06 }$} & {$ { 1.17 }$} & {$ { 1.23 }$} \\ 
  branching & $0.5N$ & - &  & {$ { 1.02 }$} & {$ { 1.00 }$} & {$ { 0.99 }$} & {$ { 0.99 }$} & {$ { 0.98 }$} & {$ { 0.99 }$} & {$ { 1.00 }$} & {$ { 1.01 }$} & {$ { 1.03 }$} & {$ { 1.02 }$} & {$ { 1.00 }$} & {$ { 1.01 }$} \\ 
  residual & $0.5N$ & - &  & {$ { 1.00 }$} & {$ { 1.00 }$} & {$ { 1.00 }$} & {$ { 1.00 }$} & {$ { 1.01 }$} & {$ { 0.99 }$} & {$ { 1.00 }$} & {$ { 1.01 }$} & {$ { 1.00 }$} & {$ { 1.00 }$} & {$ { 1.00 }$} & {$ { 1.02 }$} \\ 
  stratified & $0.5N$ & - &  & {$ { 1.03 }$} & {$ { 0.99 }$} & {$ { 1.02 }$} & {$ { 1.03 }$} & {$ { 0.95 }$} & {$ { 1.01 }$} & {$ { 1.02 }$} & {$ { 1.05 }$} & {$ { 0.91 }$} & {$ { 1.03 }$} & {$ { 1.02 }$} & {$ { 1.01 }$} \\ 
  systematic & $N$ & - &  & {$ { 0.99 }$} & {$ { 0.95 }$} & {$ { 1.06 }$} & {$ { 1.37 }$} & {$ { 1.05 }$} & {$ { 0.93 }$} & {$ { 1.10 }$} & {$ { 1.45 }$} & {$ { 0.95 }$} & {$ { 0.99 }$} & {$ { 1.11 }$} & {$ { 1.46 }$} \\ 
  systematic & $0.5N$ & - &  & {$ { 1.00 }$} & {$ { 1.00 }$} & {$ { 1.00 }$} & {$ { 1.00 }$} & {$ { 1.00 }$} & {$ { 1.00 }$} & {$ { 1.00 }$} & {$ { 1.00 }$} & {$ { 1.00 }$} & {$ { 1.00 }$} & {$ { 1.00 }$} & {$ { 1.00 }$} \\ 
  \end{tabular}
}
\end{table*}
Based on the MSE results, the chopthin method approximates the log
likelihood better than systematic and consistently for other
resampling methods.

\subsection{Stochastic Volatility Model}\label{sec:stochvol-model}
We now consider a more complicated model; a stochastic volatility
model with hidden process $X_t\in\mathbb{N}$ and $Y_t\in\mathbb{R}$
for $t\in\mathbb{N}$ defined by:
\begin{equation*}
  \begin{cases}
    X_t &= 0.9 X_{t-1}+0.25 \epsilon_t,\quad \epsilon_t\stackrel{\text{iid}}{\sim} N(0,1)\\
    Y_t&=0.1 \xi_t\exp(X_t/2),\quad \xi_t\stackrel{\text{iid}}{\sim} N(0,1)
  \end{cases}
\end{equation*}
with $X_0\sim N(0,1)$. Unlike the linear Gaussian model, the posterior
distributions are not available in closed form. As a benchmark we approximate these
distributions using a numerical approach that discretises the hidden
state space into a fine grid. 

\subsubsection{Results}\label{sec:stochvol-results}
We repeat the same simulation described in Section
\ref{sec:linear-simulation} for the stochastic volatility
model. altering Algorithm \ref{algo:smc} accordingly. The MSE of the
posterior mean and loglikelihood is presented in Table
\ref{tab:stochvol-MSEs+ll}. The results again show that using chopthin
every iteration outperforms the other resampling method. 

\begin{table*}\centering\renewcommand{\tabcolsep}{3.5pt}
  \caption{Simulations Results - Stochastic volatility model: MSE values for different resampling methods. Presented MSE values are divided by the MSE from the simulations using systematic resampling. Left table:\ MSE of the posterior mean, right table:\ MSE of the loglikelihood.}\label{tab:stochvol-MSEs+ll}
\begin{minipage}{0.48\textwidth}
\centering
\begin{tabular}{llll|lll|}
   & $\beta$ & $\eta$ &  & $100$ & $10^3$ & $10^4$ \\ 
   \hline
  chopthin & $N$ & 4 &  & {$ \underline{ 0.82 }$} & {$ \underline{ 0.82 }$} & {$ \underline{ 0.85 }$} \\ 
   chopthin & $N$ & $3+\sqrt{8}$ &  & {$ \underline{ 0.84 }$} & {$ \underline{ 0.83 }$} & {$ \underline{ 0.87 }$} \\ 
  chopthin & $N$ & 10 &  & {$ \underline{ 0.89 }$} & {$ \underline{ 0.89 }$} & {$ \underline{ 0.90 }$} \\ 
  chopthin & $0.5N$ & $3+\sqrt{8}$ &  & {$ { 1.05 }$} & {$ { 1.04 }$} & {$ { 1.01 }$} \\ 
  multinomial & $0.5N$ & - &  & {$ { 1.09 }$} & {$ { 1.10 }$} & {$ { 1.05 }$} \\ 
  branching & $0.5N$ & - &  & {$ { 1.00 }$} & {$ { 1.00 }$} & {$ { 0.99 }$} \\ 
  residual & $0.5N$ & - &  & {$ { 1.01 }$} & {$ { 1.00 }$} & {$ { 1.00 }$} \\ 
  stratified & $0.5N$ & - &  & {$ { 1.00 }$} & {$ { 1.00 }$} & {$ { 0.99 }$} \\ 
  systematic & $N$ & - &  & {$ { 1.00 }$} & {$ { 1.00 }$} & {$ { 0.98 }$} \\ 
  systematic & $0.5N$ & - &  & {$ { 1.00 }$} & {$ { 1.00 }$} & {$ { 1.00 }$} \\ 
  \end{tabular}
\end{minipage}%
\hfill
\begin{minipage}{0.48\textwidth}
\centering
\begin{tabular}{llll|lll|}
   & $\beta$ & $\eta$ &  & $100$ & $10^3$ & $10^4$ \\ 
   \hline
  chopthin & $N$ & $3+\sqrt{8}$ &  & {$ \underline{ 0.85 }$} & {$ \underline{ 0.83 }$} & {$ \underline{ 0.88 }$} \\ 
  multinomial & $0.5N$ & - &  & {$ { 1.08 }$} & {$ { 1.09 }$} & {$ { 1.05 }$} \\ 
  branching & $0.5N$ & - &  & {$ { 0.97 }$} & {$ { 1.00 }$} & {$ { 0.99 }$} \\ 
  residual & $0.5N$ & - &  & {$ { 0.99 }$} & {$ { 1.01 }$} & {$ { 0.97 }$} \\ 
  stratified & $0.5N$ & - &  & {$ { 1.00 }$} & {$ { 1.00 }$} & {$ { 0.98 }$} \\ 
  systematic & $0.5N$ & - &  & {$ { 1.00 }$} & {$ { 1.00 }$} & {$ { 1.00 }$} \\ 
  \end{tabular}
\end{minipage}
\end{table*}

The results in Tables \ref{tab:MSEs}, \ref{tab:loglik} and
\ref{tab:stochvol-MSEs+ll} show that for a fixed number of particles,
chopthin outperforms other resamplers in terms of MSE. However, as
illustrated in Table \ref{tab:effort}, using chopthin is
computationally more expensive than systematic resampling. Therefore,
use of chopthin should be favoured when the computational expense of
the other steps in the particle filter, i.e.\ the transition of the
particle values and computational of the weights, exceed the expense
of resampling. In scenarios where the transition or weight computation
are cheap, using systematic resampling may be preferred.

\newpage
\section{Implied control of the Effective Sample Size}
\label{sec:controlESS}
The following lemma shows that imposing a bound on the ratio between
the weights implicitly results in a lower bound on the ESS. It implies
that chopthin has a lower bound on the ESS after resampling.

\begin{lemma}
Suppose $w_1,\dots,w_n>0$.
Then
$$
\ESS( w)=\frac{(\sum_{i=1}^n w_i)^2}{\sum_{i=1}^n w_i^2}\geq  4\frac{\eta n+1-\eta^2}{(\eta+1)^2}
$$
where $\eta=\frac{\max_i w_i}{\min_i w_i}$.
\end{lemma}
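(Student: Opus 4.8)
The plan is to exploit the scale-invariance of the ESS together with the fact that every weight lies in the interval $[m,M]$, where $m=\min_i w_i$ and $M=\max_i w_i$, so that $\eta=M/m$. Since $\ESS(cw)=\ESS(w)$ for any $c>0$, I would first normalise so that $m=1$ and $M=\eta$; then every $w_i\in[1,\eta]$. Writing $S_1=\sum_i w_i$ and $S_2=\sum_i w_i^2$, the aim is a lower bound on $S_1^2/S_2$.

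The key observation is an elementary quadratic inequality: for each $i$ we have $(w_i-1)(w_i-\eta)\le 0$ because $w_i\in[1,\eta]$, hence $w_i^2\le(1+\eta)w_i-\eta$. Summing over $i$ collapses the whole vector of weights into a single scalar relation,
\begin{equation*}
  S_2\le(1+\eta)S_1-n\eta,
\end{equation*}
and therefore
\begin{equation*}
  \ESS(w)=\frac{S_1^2}{S_2}\ge\frac{S_1^2}{(1+\eta)S_1-n\eta}=:\phi(S_1),
\end{equation*}
where the denominator is positive because it is at least $S_2>0$.

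It then remains to minimise the one-variable rational function $\phi(t)=t^2/\big((1+\eta)t-n\eta\big)$ over $t>n\eta/(1+\eta)$. Differentiating gives $\phi'(t)\propto t\big((1+\eta)t-2n\eta\big)$, so $\phi$ decreases then increases, with a unique global minimum at $t^\ast=2n\eta/(1+\eta)$ and $\phi(t^\ast)=4n\eta/(1+\eta)^2$. This already yields the clean bound $\ESS\ge 4n\eta/(1+\eta)^2$, and the stated inequality follows at once from
\begin{equation*}
  \frac{4n\eta}{(1+\eta)^2}=\frac{4(n\eta+1-\eta^2)}{(1+\eta)^2}+\frac{4(\eta^2-1)}{(1+\eta)^2},
\end{equation*}
since the last term is nonnegative for $\eta\ge1$ (which always holds as $\eta=\max_i w_i/\min_i w_i$). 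As a sanity check, $4n\eta/(1+\eta)^2$ evaluates to exactly $0.5N$ for $\eta=3+\sqrt{8}$ and to $\approx 0.33N$ for $\eta=10$, matching the figures quoted earlier.

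The main obstacle is essentially the minimisation step: one must recognise that after the quadratic inequality the problem reduces to a scalar function of $S_1$ alone, and then verify via the sign of $\phi'$ that $t^\ast$ gives a minimum (not a maximum) and lies in the admissible range. Should one prefer to locate the extremal configuration explicitly instead of using the quadratic bound, an alternative is to show that the minimiser of $S_1^2/S_2$ over the box $[1,\eta]^n$ is attained at a vertex: fixing all coordinates but $w_j$, the derivative of $\ESS$ in $w_j$ has the sign of $s_2-s_1 w_j$, where $s_1,s_2$ are the sums over the remaining indices, and this changes sign only once, so $\ESS$ is unimodal in $w_j$ with an interior maximum and hence is minimised at $w_j\in\{1,\eta\}$. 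This reduces to a two-value configuration with $j$ weights equal to $\eta$, and minimising $(j\eta+n-j)^2/(j\eta^2+n-j)$ over $j$ reproduces the same constant $4n\eta/(1+\eta)^2$.
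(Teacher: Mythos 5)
Your proof is correct, and it takes a genuinely different route from the paper's. The paper argues variationally: it normalises the weights, invokes compactness to obtain a minimiser of $\ESS$, shows by a local two-coordinate perturbation that the minimiser takes at most three values (two extremes in ratio $\eta$ plus at most one intermediate value $a\tau$), and then minimises the resulting one-parameter family $h(x)=[x+\eta(n-x)]^2/(x-1+(n-x+1)\eta^2)$ over $x\in[1,n-1]$; the $+1-\eta^2$ correction in the stated bound is an artefact of the crude bounds $\tau\geq 1$ and $\tau^2\leq \eta^2$ used to eliminate the intermediate value. You instead use the P\'olya--Szeg\H{o}/Kantorovich device $(w_i-1)(w_i-\eta)\leq 0$ to linearise $\sum_i w_i^2$ in terms of $\sum_i w_i$, which collapses the problem to a one-variable minimisation with no compactness argument and no case analysis; all your individual steps (positivity of the denominator, the sign of $\phi'$, the location of $t^\ast$ in the admissible range) check out. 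This is shorter and in fact yields the strictly stronger bound $\ESS(w)\geq 4\eta n/(\eta+1)^2$, which is attained exactly when $n/(\eta+1)$ of the weights sit at $\eta$ and the rest at $1$ (e.g.\ $n=3$, $\eta=2$, $w=(1,1,2)$), and from which the paper's bound follows since $\eta\geq 1$; your observation that this constant reproduces the $0.5N$ and $0.33N$ figures quoted in the paper is a good consistency check. Your alternative sketch (coordinatewise unimodality of $\ESS$ on the box $[1,\eta]^n$ forcing the minimiser to a vertex, then optimising over the number of weights at $\eta$) is also sound and is the one closest in spirit to the paper's extremal-configuration argument, while avoiding the paper's loss from the middle value $\tau$.
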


\begin{proof}
In the case where all weights are equal, i.e.\ $\eta=1$, then $\ESS(w)=n$, thus inequality holds.
From now on consider the case $\eta>1$. 

Let $W_i=\frac{w_i}{\sum_jw_j}$ be the normalized weights
corresponding to $W$.  Then $\ESS(w)=\ESS(W)$. The set of
possible normalized weights is compact and $\ESS$ is a continuous
function, thus there exists a $W^{\ast}$ that minimises $\ESS$.
Without loss of generality, assume $W^{\ast}_1\leq \dots \leq
W^{\ast}_n$.

The normalised weight $W^{\ast}_i$ has to be of the form
$W^{\ast}_i=a$ for $i<k$, $W^{\ast}_k=a\tau$,
$W^{\ast}_i=\eta a$ for $i<k$, where $k\in
\{1,\dots,n-1\}$, $a>0$ and $1\leq \tau<\eta$.  To see this let $W$ be
a normalised weight vector for which there exist mutually distinct
indices $i,j,k,l$ such that $W_i<W_j\leq W_k<W_l$. Define a new weight vector $V$ identical
to $W$ except for $V_j=W_j-\Delta$, $V_{k}=W_k+\Delta$ with
$\Delta=\min((W_l-W_k)/2, (W_j-W_i)/2)$.  Then
\begin{align*}
  1/&\ESS(V)=\sum_\nu V_\nu^2\\
  &=2\Delta^2+2\Delta(W_k-W_j)
  +\sum_{\nu}W_\nu^2>1/\ESS(W)
\end{align*}
which shows that $W$ does not minimise $\ESS$. 
Hence, $W^{\ast}$ can take at most 3 values, the middle one, if present, appearing exactly once.
The two extreme values have to have a ratio of $\eta$, otherwise one could move them further apart and create a weight vector with smaller $\ESS$.

As
$\sum W_i^{\ast}=a[k-1+\tau+\eta(n-k)]$, we have
\begin{align*}
\ESS(W^\ast)=&\frac{[k-1+\tau+\eta(n-k)]^2}{k-1+\tau^2+(n-k)\eta^2}\\
&\geq \frac{[k+\eta(n-k)]^2}{k-1+(n-k+1)\eta^2}
\geq \inf_{x\in [1,n-1]} h(x)
\end{align*}
where 
$h(x)=\frac{[x+\eta(n-x)]^2}{x-1+(n-x+1)\eta^2}$.

It remains to derive the minimum of $h$.
Candidates for minimizers of $h$ are $x=\eta n/(\eta-1)$ (which is not in the
right range) and $x = (\eta (n+2)+2)/(\eta+1)$. 
Plugging this into $h$ gives
$h(x)\geq 4\frac{\eta n+1-\eta^2}{(\eta+1)^2}$
\end{proof}

Larger $\eta$ allow for more variability in the weights and thus should lead to lower effective sample sizes. Consistent with this, the lower bound on $\ESS$ is decreasing in $\eta$. This can be seen by differentiating it with respect to $\eta$.

For large $n$, the leading term is
$4\frac{\eta n}{(\eta+1)^2}$. Equating this to a desired minimal effective sample size $\gamma n$ gives
$$
\eta = \frac{2-\gamma+2 \sqrt{1-\gamma}}{\gamma}
$$
For example, for $\gamma=0.5$, this leads to
$\eta=3+\sqrt{8}$. Furthermore, for $\eta=10$, the lower bound on the
$\ESS$ is $\frac{40}{121}n-\frac{99}{121}\approx 0.33n$.

\section{Discussion}
\subsection{Why not only impose an upper or a lower threshold on the weights?}

The chopthin algorithm imposes a bound on the ratio of the largest
and the smallest weight. Alternatively, one could have imposed only a
lower or only an upper bound on the normalized weights. The following
examples illustrate that there are situations in which these bounds
would not lead to resampling despite very uneven weights. The
chopthin algorithm (with $\eta<n$) would even out the weights in both
examples.

\begin{example}
  Suppose our weight vector $w$ of length $n$ is produced by one
  importance sampling step, where the target distribution is a uniform
  distribution on $[0,0.5]$ and the importance sampling distribution
  is a uniform distribution on $[0,1]$. Then roughly half of the
  weights will be approximately $2/n$ and half of the weight will be
  0. None of the weights is large, so imposing
  an upper bound on the weights would not lead to resampling.
\end{example}

\begin{example}
  Consider the same setting as in the previous example, but now having
  as target distribution a mixture of two equally probably components:
  a uniform distribution on $[0,1]$ and a uniform distribution on
  $[0,1/n]$.  Suppose the first particle is in [0,1/n] and all other
  samples are greater than $1/n$.  Then the weight of the first
  particle is $(n+1)/2$ and the weight of all other particles is
  $1/2$.  Thus in this case, no weight is small, so imposing a lower
  bound on the weights would not lead to resampling.
\end{example}

\section{Summary}
In this paper we have introduced the chopthin algorithm which bounds
the ratio between the weights. We showed, in simulations, that
chopthin consistently outperforms standard resampling schemes used in
particle filters. The simulations also demonstrated that chopthin can
be used at every iteration in a particle filter with no detrimental
effects. The chopthin algorithm can be implemented efficiently and we
have proved that its expected effort is linear in the number of
samples. Lastly, we have shown that imposing a bound on the ratio
between weights implicitly controls the ESS. As mentioned in Section
\ref{sec:sim1}, use of chopthin within particle filters over other,
less computational expensive, resamplers should be favoured when the
expense of resampling is negligible in comparison to the other steps
in the particle filter.

Proving a central limit type theorem of the particle filter estimates
using chopthin resampling is a natural next step. However, as the
chopthin algorithm uses systematic resampling this will not be
straightforward \citep{gentil2008}. Replacing systematic resampling
with a resampling method more amenable to theoretical developments
could be a topic for future research.

\newpage

\end{document}